\begin{document}
\title{Wireless Surveillance of Two-Hop Communications
\thanks{J. Xu is the corresponding author.}}

\author{Ganggang Ma$^1$, Jie Xu$^1$, Lingjie Duan$^2$, and Rui Zhang$^3$\\
$^1$School of Information Engineering, Guangdong University of Technology\\
$^2$Engineering Systems and Design Pillar, Singapore University of Technology and Design\\
$^3$Department of Electrical and Computer Engineering, National University of Singapore\\
E-mail:\{gangma.gdut,~jiexu.ustc\}@gmail.com,~lingjie\_duan@sutd.edu.sg,~elezhang@nus.edu.sg
}

\maketitle

\begin{abstract}
Wireless surveillance is becoming increasingly important to protect the public security by legitimately eavesdropping suspicious wireless communications. This paper studies the wireless surveillance of a two-hop suspicious communication link by a half-duplex legitimate monitor. By exploring the suspicious link's two-hop nature, the monitor can adaptively choose among the following three eavesdropping modes to improve the eavesdropping performance: (I) \emph{passive eavesdropping} to intercept both hops to decode the message collectively, (II) \emph{proactive eavesdropping} via {\emph{noise jamming}} over the first hop, and (III) \emph{proactive eavesdropping} via {\emph{hybrid jamming}} over the second hop. In both proactive eavesdropping modes, the (noise/hybrid) jamming over one hop is for the purpose of reducing the end-to-end communication rate of the suspicious link and accordingly making the interception more easily over the other hop. Under this setup, we maximize the eavesdropping rate at the monitor by jointly optimizing the eavesdropping mode selection as well as the transmit power for noise and hybrid jamming. Numerical results show that the eavesdropping mode selection significantly improves the eavesdropping rate as compared to each individual eavesdropping mode.
\end{abstract}
\begin{keywords}
Wireless surveillance, two-hop communications, proactive eavesdropping, noise jamming, hybrid jamming.
\end{keywords}

\newtheorem{lemma}{\underline{Lemma}}[section]
\newtheorem{proposition}{\underline{Proposition}}[section]
\setlength\abovedisplayskip{1pt}
\setlength\belowdisplayskip{1pt}

\section{Introduction}
The emergence of infrastructure-free wireless communications networks (e.g., unmanned aerial vehicle (UAV) communications \cite{UAV}) brings new threads to public security, as they may be misused by criminals or terrorists to commit crimes or launch terror attacks \cite{paradigm}. To detect and stop such misuse, there is a growing need for authorized parties to surveil them via legitimate eavesdropping \cite{Rlefading,jammingfading,spoofrelay,HTran} and intervene in them via legitimate jamming and spoofing \cite{spoofbpsk,Xspoof}. This introduces a paradigm shift from the conventional secrecy communications \cite{YZou} (defending against eavesdropping \cite{Gopala2008,ZhouMahamHjorungnes2011,KapetanovicZhengRusek2015}, jamming, and spoofing \cite{Xiong2015}) to the new wireless surveillance and intervention legitimately exploiting these attacks \cite{paradigm}.

In the literature, there have been several prior works \cite{Rlefading,jammingfading,spoofrelay,HTran} investigating the wireless surveillance of a point-to-point suspicious communication link from Alice (suspicious transmitter) to Bob (suspicious receiver) via a legitimate monitor. Conventionally, the monitor employs {\it passive eavesdropping} to intercept the communicated message. This method, however, is difficult to overhear effectively when the monitor is located far away from Alice. To overcome this issue, the authors in \cite{Rlefading,jammingfading} proposed {\it proactive eavesdropping via noise jamming} by enabling the monitor to operate in a {\it full-duplex} mode. In this method, the monitor sends artificial noise (AN) to interfere with the Bob receiver, reduce its received signal-to-interference-plus-noise ratio (SINR) and the communication rate, thus facilitating the eavesdropping at the same time. As full-duplex radios are employed, this method requires the monitor to efficiently cancel the self-interference (SI) from its jamming to eavesdropping antennas. Furthermore, the authors in \cite{spoofrelay} proposed {\it proactive eavesdropping via hybrid jamming}, where the full-duplex monitor forwards its overheard message from Alice combined with an AN. At the Bob receiver, the forwarded message by the monitor is destructively added with the original message from Alice to reduce Bob's received signal strength, while the AN increases its received interference power. As a result, hybrid jamming can more effectively reduce Bob's received SINR (and the communication rate) than noise jamming, and therefore can help achieve better eavesdropping performance. Nevertheless, hybrid jamming is more difficult to be implemented, since the monitor not only needs to perform the SI cancellation (SIC), but also requires instantaneous message forwarding to ensure the two messages' destructive combining at the Bob receiver, which is technically challenging due to hardware constraints and channel acquisition.
%
%
%

%
%
%

\begin{figure*}
  \centering
  \subfigure[Mode (I): passive eavesdropping over both hops.]{
    \label{fig:subfig:a} 
    \includegraphics[width=2in]{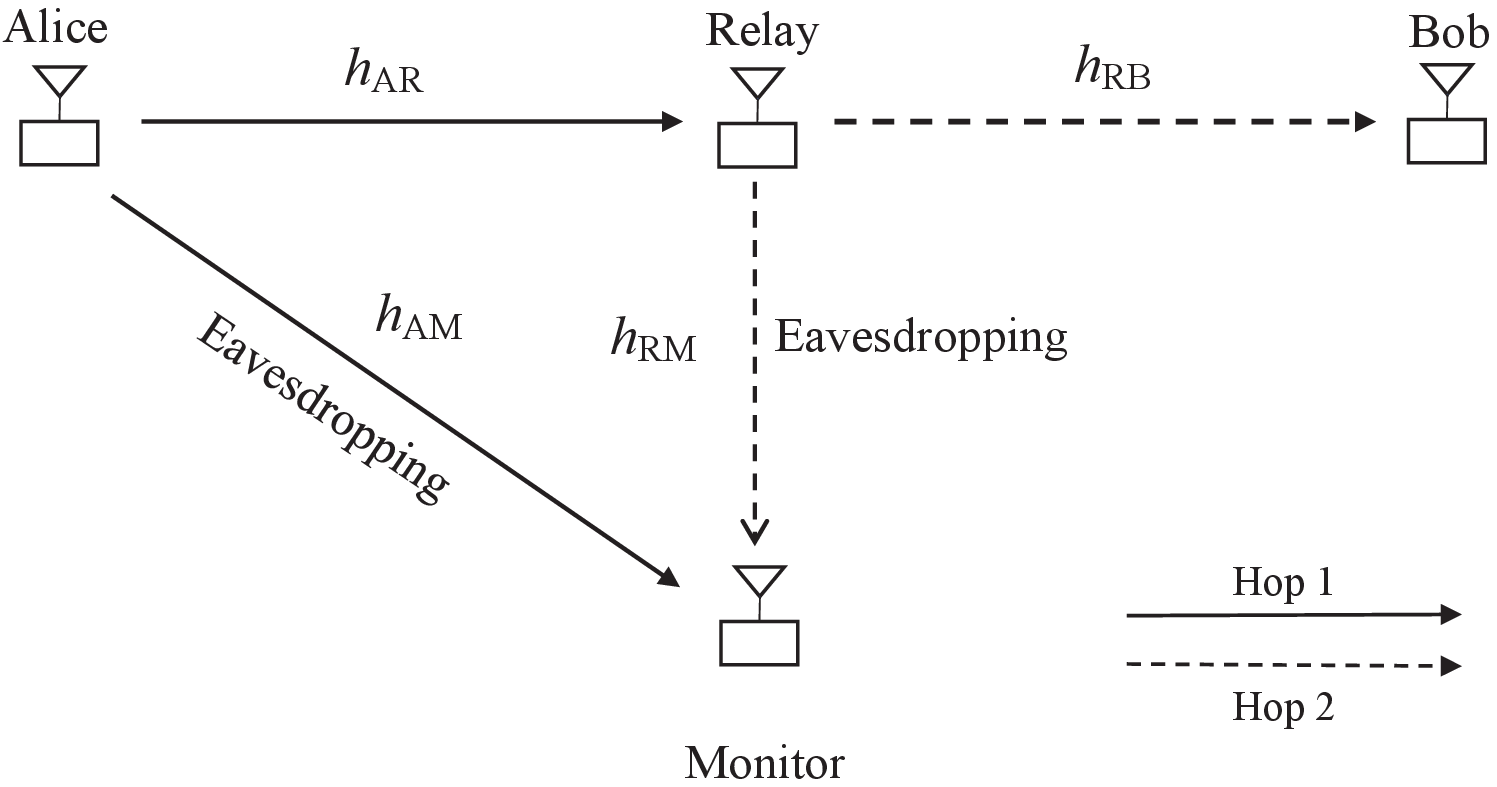}}
  \hspace{0.1in}
  \subfigure[Mode (II): proactive eavesdropping via noise jamming over the first hop.]{
    \label{fig:subfig:b} 
    \includegraphics[width=2in]{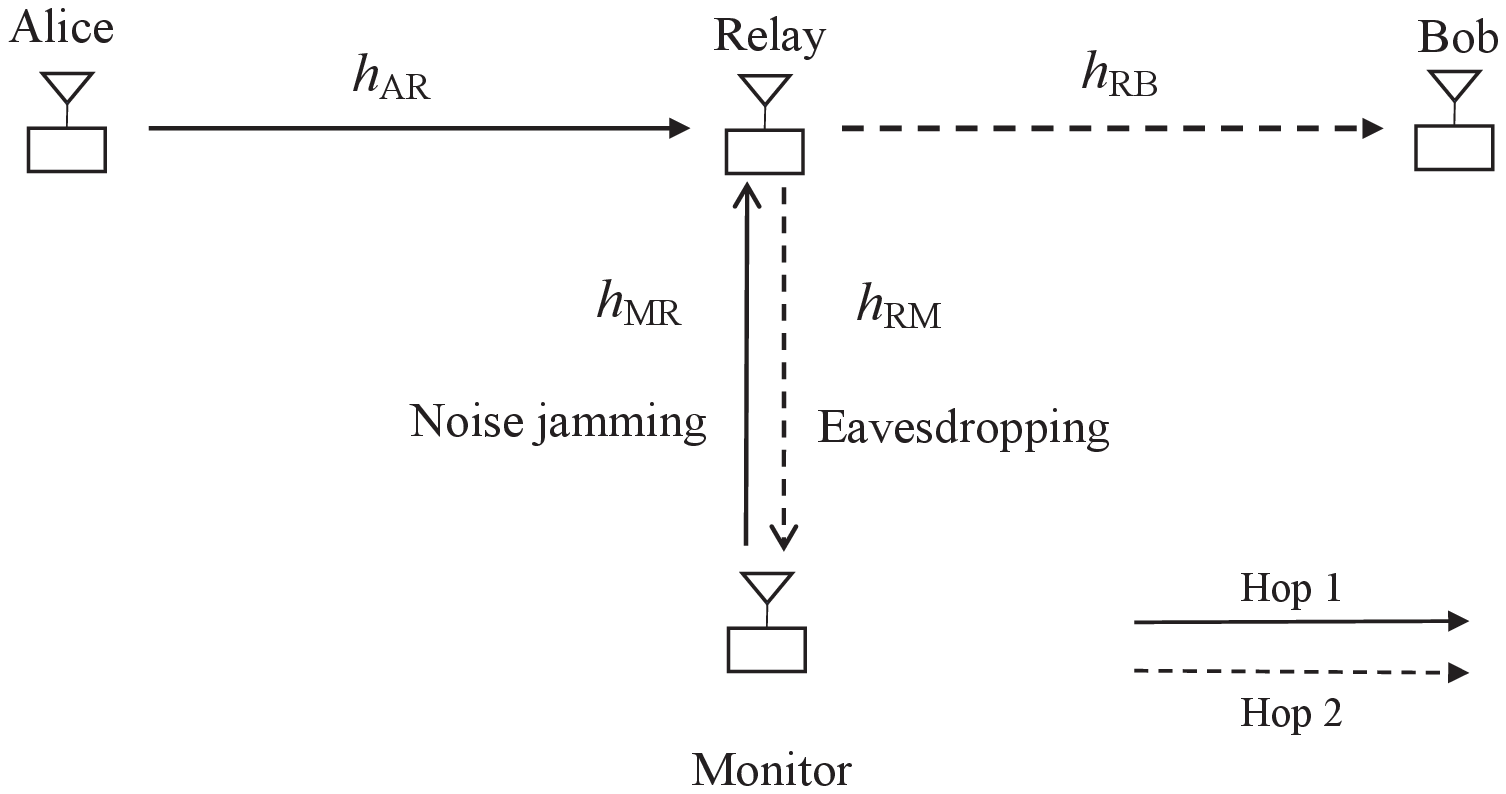}}
  \hspace{0.1in}
  \subfigure[Mode (III): proactive eavesdropping via hybrid jamming over the second hop.]{
    \label{fig:subfig:c} 
    \includegraphics[width=2in]{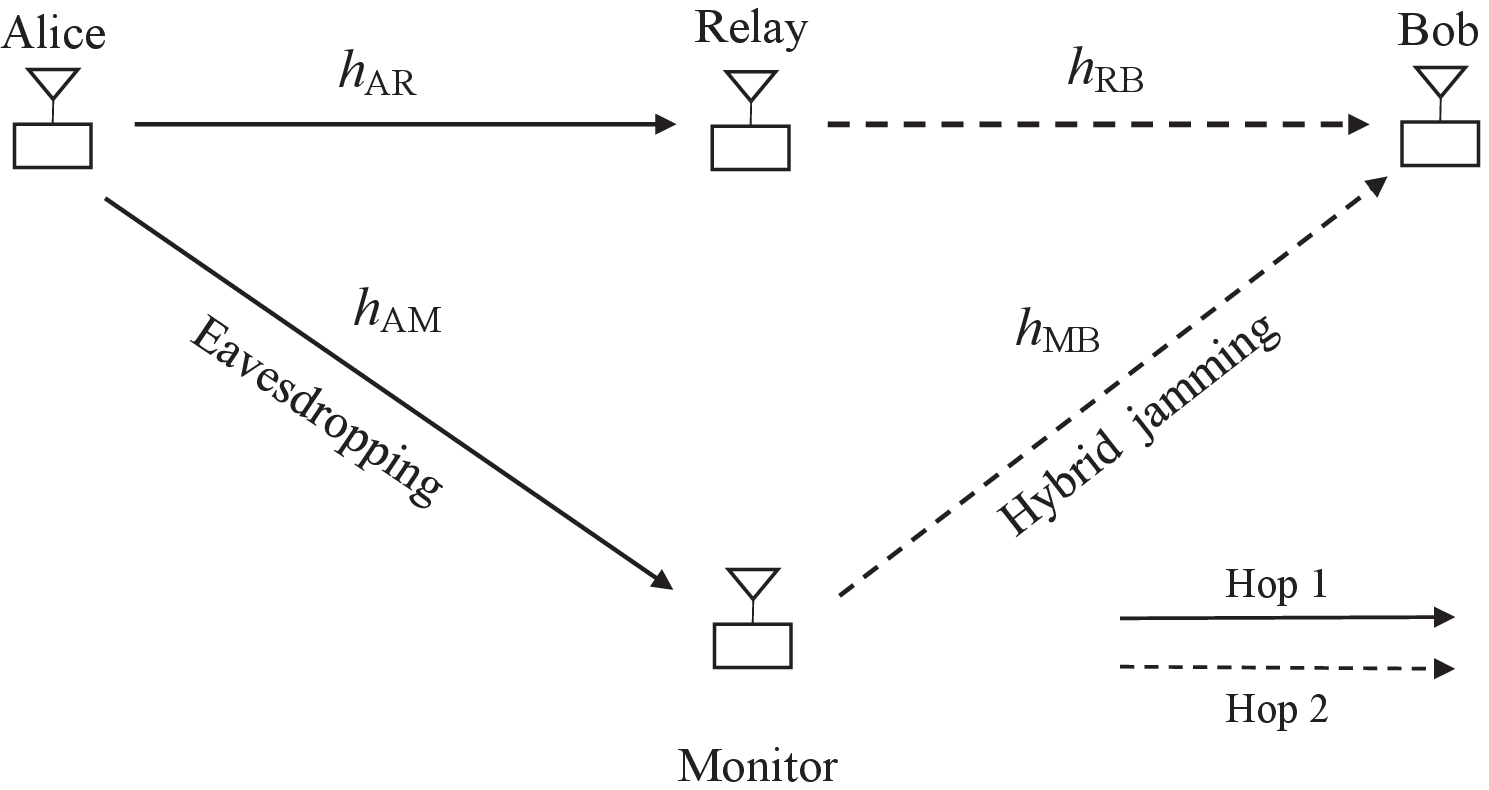}}
 \caption{A wireless surveillance scenario, where a monitor aims to legitimately eavesdrop a two-hop suspicious communication link from Alice to Bob through a relay.}
  \label{fig:subfig}\vspace{-2em} 
\end{figure*}
In practice, like most infrastructure-free wireless communications, the suspicious communication is likely to be operated in a multi-hop manner to extend the communication range. This motivates us to investigate new surveillance approaches by exploiting such a multi-hop nature to improve the eavesdropping performance. For example, the monitor can perform passive eavesdropping over multiple hops to intercept more than one copy of the suspicious message for overhearing more clearly. Furthermore, by noting the fact that the end-to-end communication rate of a multi-hop communication is highly dependent on the SINR of each individual hop, the monitor can reap the benefit of proactive eavesdropping in a half-duplex way, by jamming over one hop to reduce the end-to-end communication rate for intercepting more easily over another hop. Such half-duplex proactive eavesdropping efficiently eliminates the high requirements of SIC and instantaneous message forwarding in prior works with full-duplex monitors.

For the purpose of initial investigation, we consider the wireless surveillance of a simplified two-hop suspicious communication link via a {\it half-duplex} legitimate monitor, where the monitor aims to eavesdrop the suspicious message communicated from Alice to Bob through a relay. By exploring the suspicious link's two-hop nature, the monitor can adaptively choose among the following three eavesdropping modes to improve the eavesdropping performance: (I) passive eavesdropping to intercept both hops to decode the message collectively, (II) proactive eavesdropping via noise jamming over the first hop, and (III) proactive eavesdropping via hybrid jamming over the second hop. Note that due to the message causality issue, in mode (II), only noise jamming is feasible at the first hop as the suspicious message is not overheard yet; while in mode (III), more advanced hybrid jamming is implementable at the second hop by exploiting the overheard signal at the first hop. Under this setup, we maximize the eavesdropping rate at the monitor by jointly optimizing the eavesdropping mode selection as well as the transmit power for noise and hybrid jamming. Numerical results show that the eavesdropping mode selection significantly improves the eavesdropping rate as compared to each individual eavesdropping mode under both fixed and time-varying channels.\vspace{-0.5em}

\section{System Model}

In this paper, we consider the wireless surveillance problem as shown in Fig.~\ref{fig:subfig}, where a legitimate monitor aims to eavesdrop a two-hop suspicious communication link from Alice to Bob. The communication link goes through a half-duplex and decode-and-forward (DF) relay for extending the communication range between Alice and Bob. We consider a block-based flat fading channel model, where wireless channels remain unchanged over a time block of our interest. Let $h_{\rm AR}$, $h_{\rm RB}$, $h_{\rm AM}$, $h_{\rm MR}$, $h_{\rm RM}$, and $h_{\rm MB}$ denote the channel coefficients from Alice to the relay, from the relay to Bob, from Alice to the monitor, from the monitor to the relay, from the relay to the monitor, and from the monitor to Bob, respectively. It is assumed that the suspicious users (Alice, the relay, and Bob) only know the channel state information (CSI) of their suspicious links (i.e., $h_{\rm AR}$ and $h_{\rm RB}$), and they use fixed transmit powers but can adaptively adjust the end-to-end communication rate based on the SINRs of both hops. The monitor practically operates in a half-duplex manner to overhear the suspicious communication. It is assumed that the monitor knows the global CSI of $h_{\rm AR}$, $h_{\rm RB}$, $h_{\rm AM}$, $h_{\rm MR}$, $h_{\rm RM}$, and $h_{\rm MB}$. This assumption is made to characterize the fundamental performance limits of the legitimate eavesdropping in this case, and our design is extendible to the practical learning-based monitor without knowing the perfect CSI initially as in \cite{jammingfading}. By exploring the two-hop nature of the suspicious communication, the half-duplex monitor can operate in the following three eavesdropping modes, respectively.\vspace{-0.5em}

\subsection{Mode (I): Passive Eavesdropping over Both Hops}

In mode (I), as shown in Fig.~\ref{fig:subfig:a}, the monitor combines the overheard suspicious information from both hops for collective eavesdropping. Consider first the suspicious communication. In the first hop, let $s$ and $P_{\rm A}$ denote Alice's transmit suspicious message and the fixed transmit power, respectively. Here, $s$ is a circularly symmetric complex Gaussian (CSCG) random variable with zero mean and unit variance, i.e., $s \sim \mathcal{CN}(0,1)$. The received signal at the relay is $y{\rm _R}=\sqrt{P_{\rm A}}h_{\rm AR}s+n_1$, where $n_1 \sim \mathcal{CN}(0,\sigma^2)$ denotes the additive white Gaussian noise (AWGN) at the relay receiver. After the relay decodes the suspicious message $s$, in the second hop it uses the same codebook to send $s$ to Bob by using the fixed transmit  power $P_{\rm R}$. The received signal at Bob is $y_{\rm B}=\sqrt{P_{\rm R}}h_{\rm RB}s+n_2$, where $n_2 \sim \mathcal{CN}(0,{\sigma}^{2})$ denotes the AWGN at the Bob receiver. In the two hops, the received signal-to-noise ratios (SNRs) at the relay and Bob are denoted as ${\gamma}_{\rm R}=\frac{|h_{\rm AR}|^2P_{\rm A}}{\sigma^{2}}$ and ${\gamma}_{\rm B}=\frac{|h_{\rm RB}|^{2}P_{\rm R}}{{\sigma}^{2}}$, and the corresponding achievable rates (in bps/Hz) are respectively
\begin{align}
r_{\rm R}=\frac{1}{2}\log_{2}\left(1+\frac{|h_{\rm AR}|^2P_{\rm A}}{\sigma^{2}}\right),
\label{equa:jnl:3}
\end{align}
\begin{align}
r_{\rm B}=\frac{1}{2}\log_{2}\left(1+\frac{|h_{\rm RB}|^{2}P_{\rm R}}{{\sigma}^{2}}\right),
\label{equa:jnl:4}
\end{align}
where $\frac{1}{2}$ indicates that each hop occupies only a half of the normalized time-frequency slot. The end-to-end suspicious communication rate is given as $\min \left(r_{\rm R},r_{\rm B}\right)$.

In the two hops, the monitor passively overhears the suspicious message from Alice and the relay, respectively, where the received signals are $y_{\rm M1}=\sqrt{P_{\rm A}}h_{\rm AM}s+n_3$ and $y_{\rm M2}=\sqrt{P_{\rm R}}h_{\rm RM}s+n_4$ with $n_3\sim \mathcal{CN}(0,{\sigma}^{2})$ and $n_4\sim \mathcal{CN}(0,{\sigma}^{2})$. By employing the maximum ratio combining (MRC) to decode $s$, the SNR and the achievable rate at the monitor in mode (I) are respectively ${\gamma}_{\rm M}^{(\rm I)} =\frac{P_{\rm A}|h_{\rm AM}|^2+P_{\rm R}|h_{\rm RM}|^2}{{\sigma}^{2}}$ and
\begin{align}
r_{\rm M}^{(\rm I)}=\frac{1}{2}\log_{2}\left(1+\frac{P_{\rm A}|h_{\rm AM}|^2+P_{\rm R}|h_{\rm RM}|^2}{{\sigma}^{2}}\right),
\label{equa:pass:3}
\end{align}
where the superscripts of ${\gamma}_{\rm M}^{(\rm I)}$ and $r_{\rm M}^{(\rm I)}$ represent mode (I).

Now, we formally define the eavesdropping rate at the monitor. Similarly as in \cite{Rlefading,jammingfading,spoofrelay}, when the achievable rate $r^{(\rm I)}_{\rm M}$ at the monitor is no smaller than the end-to-end suspicious communication rate $\min \left(r_{\rm R},r_{\rm B}\right)$, the monitor can successfully decode the suspicious message without any error. In this case, the eavesdropping rate is defined as $R_{\rm eav}^{(\rm I)} = \min \left(r_{\rm R},r_{\rm B}\right)$. Otherwise, if $r_{\rm M}^{(\rm I)}$ is smaller than $\min \left(r_{\rm R},r_{\rm B}\right)$, the monitor cannot decode the suspicious message without errors and the eavesdropping rate is $R_{\rm eav}^{(\rm I)}=0$. Thus, the eavesdropping rate in the passive eavesdropping mode is defined as:
\begin{align}
R_{\rm eav}^{(\rm I)} = \left\{\begin{array}{ll}
\min \left(r_{\rm R},r_{\rm B}\right), & {\rm{if}}\ r_{\rm M}^{({\rm I})} \geq \min \left(r_{\rm R},r_{\rm B}\right),\\
0, & {\rm otherwise.}
 \end{array} \right.
\label{equa:pass:4}
\end{align}
Note that $R_{\rm eav}^{(\rm I)}$ is constant under fixed transmit power $P_{\rm A}$  at Alice and $P_{\rm R}$ at the relay.\vspace{-0.5em}

\subsection{Mode (II): Proactive Eavesdropping via Noise Jamming over the First Hop}
In mode (II), as shown in Fig.~\ref{fig:subfig:b}, the monitor jams the relay receiver via AN in the first hop to reduce the suspicious communication rate for facilitating the eavesdropping from the relay transmitter in the second hop. In the first hop, let $x_1 \sim \mathcal{CN}(0,1)$ and $Q_1$ denote the jamming signal (AN) and its power at the monitor, respectively, where the subscripts of $x_1$ and $Q_1$ indicate the first hop. In this jamming case, the received signal at the relay in the first hop is denoted as $\tilde y_{\rm R}=\sqrt{P_{\rm A}}h_{\rm AR}s+\sqrt{Q_1}h_{\rm MR}x_1+n_{1}$. Accordingly, the SINR at the relay reduces to ${\tilde\gamma}_{\rm R}(Q_1)=\frac{|h_{\rm AR}|^2P_{\rm A}}{|h_{\rm MR}|^2Q_1+\sigma^{2}}$ and the achievable rate is
\begin{align}
\tilde r_{\rm R}(Q_1)=\frac{1}{2}\log_{2}\left(1+\frac{|h_{\rm AR}|^2P_{\rm A}}{|h_{\rm MR}|^2Q_{1}+\sigma^{2}}\right).
\label{equa:noise:1}
\end{align}
In the second hop, similarly as in mode (I), the achievable rate at Bob is equal to $r_{\rm B}$ in (\ref{equa:jnl:4}). Accordingly, the end-to-end suspicious communication rate is $\min \left(\tilde r_{\rm R}(Q_1), r_{\rm B}\right)$.

At the half-duplex monitor, as it can only eavesdrop the suspicious message from the relay transmitter in the second hop, the achievable rate at the monitor is given as
\begin{align}
\tilde r_{\rm M}^{(\rm II)}=\frac{1}{2}\log_{2}\left(1+\frac{|h_{\rm RM}|^{2}P_{\rm R}}{{\sigma}^{2}}\right).
\label{equa:noise:2}
\end{align}
Similarly to (\ref{equa:pass:4}), given the jamming power $Q_1$, the eavesdropping rate at the monitor is defined as
\begin{align*}
\tilde R_{\rm eav}^{(\rm II)}{(Q_1)} = \left\{\begin{array}{ll}
\min \left(\tilde r_{\rm R}(Q_1), r_{\rm B}\right), ~ {\rm{if}}\ \tilde r_{\rm M}^{(\rm II)}\geq \min \left(\tilde r_{\rm R}(Q_1), r_{\rm B}\right) \\
0, ~~~~~~~~~~~~~~~~~~~~~ {\rm otherwise.}
 \end{array} \right.
\end{align*}

In practice, the monitor should adjust the jamming power $Q_1$ to maximize the eavesdropping rate $\tilde R_{\rm eav}^{(\rm II)}{(Q_1)}$. Let $Q_{\rm max}$ denote the maximum jamming power at the monitor. The maximum eavesdropping rate in this mode is given as
\begin{align}
R_{\rm eav}^{(\rm II)} \triangleq
\mathop{{\max}}\limits_{0\le Q_1 \le Q_{\rm max}} ~~\tilde R_{\rm eav}^{(\rm II)}(Q_1).
\label{equa:noise:3}
\end{align}
Note that jamming at the maximum transmit power with $Q_1 = Q_{\rm max}$ is generally not optimal for problem (\ref{equa:noise:3}), since this may reduce the suspicious communication rate $\min \left(\tilde r_{\rm R}(Q_1), r_{\rm B}\right)$ too much and lead to over-reduced eavesdropping rate.
\vspace{-0.5em}

\subsection{Mode (III): Proactive Eavesdropping via Hybrid Jamming over the Second Hop}
In mode (III), as shown in Fig.~\ref{fig:subfig:c}, the monitor uses the hybrid jamming in the second hop to reduce the end-to-end communication rate for helping eavesdropping in the first hop.{\footnote{Note that in the second hop here, we can also use noise jamming, which, however, corresponds to a special case of the hybrid jamming and thus is not considered separately.}} As for the suspicious communication, the achievable rate at the relay in the first hop is equal to $r_{\rm R}$ in (\ref{equa:jnl:3}) in mode (I). In the second hop, based on the amplify-and-forward principle at the monitor, the monitor designs the hybrid jamming signal as $\hat\alpha y_{\rm M1}+x_2$, where $\hat\alpha$ denotes the amplifying coefficient for the received signal $y_{\rm M1}=\sqrt{P_{\rm A}}h_{\rm AM}s+n_3$ in the first hop, and $x_2 \sim \mathcal{CN}(0,Q_2)$ denotes the AN in the second hop, where the subscripts of $x_2$ and $Q_2$ indicate the second hop. The received signal at Bob is denoted as
\begin{align*}
&\hat y_{\rm B}=\sqrt{P_{\rm R}}h_{\rm RB}s+h_{\rm MB}({\hat\alpha y_{\rm M1}+x_2})+n_2 \\
         &=(\sqrt{P_{\rm R}}h_{\rm RB}+\hat\alpha \sqrt{P_{\rm A}}h_{\rm MB}h_{\rm AM})s+h_{\rm MB}x_2+  \alpha h_{\rm MB}n_3+n_2.
\end{align*}
In order to most efficiently jam the Bob receiver, the monitor designs $\hat\alpha$ as $\hat{\alpha} = -  \frac{h_{\rm RB} h_{\rm AM}^\dagger h_{\rm MB}^\dagger}{|h_{\rm RB} h_{\rm AM}^\dagger h_{\rm MB}^\dagger|} {{\alpha}}$, where the superscript $\dagger$ denotes the complex conjugate operation, and ${\alpha} \ge 0$ denotes the magnitude of $\hat{\alpha}$. This design makes the forwarded message $\hat\alpha \sqrt{P_{\rm A}}h_{\rm MB}h_{\rm AM}s$ from the monitor being destructively combined with $\sqrt{P_{\rm R}}h_{\rm RB}s$ from Alice at the Bob receiver, thus maximally reducing its SINR and achievable rate, which are respectively given as
\begin{align}
{\hat{\gamma}}_{\rm B}(\alpha, Q_2)&=\frac{|\sqrt{P_{\rm R}}h_{\rm RB}-\alpha\sqrt{P_{\rm A}}h_{\rm AM}h_{\rm MB}|^2}{|h_{\rm MB}|^2Q_2+{\alpha}^2|h_{\rm MB}|^2{\sigma}^2+{\sigma}^{2}},\nonumber\\ 
\hat r_{\rm B}(\alpha, Q_2)&=\frac{1}{2}\log_{2}\left(1+\frac{|\sqrt{P_{\rm R}}h_{\rm RB}-\alpha\sqrt{P_{\rm A}}h_{\rm AM}h_{\rm MB}|^2}{|h_{\rm MB}|^2Q_2+{\alpha}^2|h_{\rm MB}|^2{\sigma}^2+{\sigma}^{2}}\right).
\label{equa:combined:2}
\end{align}
By combining (\ref{equa:jnl:3}) and (\ref{equa:combined:2}), the end-to-end suspicious communication rate is $\min \left(r_{\rm R}, \hat r_{\rm B}(\alpha, Q_2)\right)$.

The half-duplex monitor can only overhear the suspicious message in the first hop. In this case, the received SNR at the monitor is $\hat\gamma_{\rm M}^{(\rm III)}=\frac{|h_{\rm AM}|^{2}P_{\rm A}}{{\sigma}^{2}}$, and the achievable rate is $\hat r_{\rm M}^{(\rm III)}=\frac{1}{2}\log_{2}\left(1+\frac{|h_{\rm AM}|^{2}P_{\rm A}}{{\sigma}^{2}}\right)$. Similarly to (\ref{equa:pass:4}) and under given $\alpha$ and $Q_2$, the eavesdropping rate is expressed as
\begin{align*}
&\hat R_{\rm eav}^{(\rm III)}{(\alpha,Q_2)} \nonumber\\
= & \left\{\begin{array}{ll}
\min \left(r_{\rm R},\hat r_{\rm B}(\alpha, Q_2)\right), &{\rm{if}}~ \hat r_{\rm M}^{(\rm III)}\geq \min \left(r_{\rm R},\hat r_{\rm B}(\alpha, Q_2)\right) \\
0, &{\rm otherwise.}
\end{array} \right.
\end{align*}

The monitor should jointly adjust $\alpha$ and $Q_2$ to maximize the eavesdropping rate $\hat R_{\rm eav}^{(\rm III)}{(\alpha,Q_2)}$. Note that the jamming power at the monitor is $\mathbb{E}(|\alpha y_{\rm M1}+x_2|^2) = {\alpha}^2P_{\rm A}|h_{\rm AM}|^2+{\alpha}^2\sigma^2+Q_2$, which cannot exceed the maximum value $Q_{\rm max}$. Here, $\mathbb{E}(\cdot)$ denotes the statistical expectation. Under this power constraint, the maximum eavesdropping rate in this mode is given as
\begin{align}
R_{\rm eav}^{(\rm III)} \triangleq &\mathop{{\max}}\limits_{\alpha\ge 0, Q_2\ge 0}~ \hat R_{\rm eav}^{(\rm III)}(\alpha, Q_2)\label{pro:15} \\
{\mathrm{s.t.}} ~& {\alpha}^2P_{\rm A}|h_{\rm AM}|^2+{\alpha}^2\sigma^2+Q_2 \le Q_{\rm max.}
\label{equa:combined:4}
\end{align}
Problem (\ref{pro:15}) will be solved later by determining $\alpha$ and $Q_2$ to balance between the message forwarding to reduce the received signal strength at the SINR numerator versus the AN to increase the interference power at the SINR denominator.
\vspace{-0.5em}

\section{Joint Eavesdropping-Mode Selection and Jamming Power Allocation}
\vspace{-0em}

In this section, we first obtain the maximum eavesdropping rate at the monitor under each individual eavesdropping mode, and then select the best one among them. As $R_{\rm eav}^{(\rm I)}$ for mode~(I) is a constant term, we only need to find $R_{\rm eav}^{(\rm II)}$ and $R_{\rm eav}^{(\rm III)}$ for modes (II) and (III) by solving problems (\ref{equa:noise:3}) and (\ref{pro:15}), respectively.\vspace{-0em}

\subsection{Optimal Noise Jamming Power for Mode (II)}\vspace{-0em}

First, we solve problem (\ref{equa:noise:3}) to obtain $R_{\rm eav}^{(\rm II)}$ in mode (II). In the case when the achievable rate $r_{\rm R}$ at the relay is larger than $r_{\rm M}$ at the monitor, the jamming power is denoted by
\begin{align}
\tilde Q_1=\max\left(\frac{(|h_{\rm AR}|^2P_{\rm A}-|h_{\rm RM}|^2P_{\rm R}){\sigma}^2}{|h_{\rm MR}|^2|h_{\rm RM}|^2P_{\rm R}},0\right)
\label{equa:opnoise:1}
\end{align}
such that $\tilde r_{\rm R}(\tilde Q_1)$ at the relay is reduced to be equal to $\tilde r_{\rm M}^{(\rm II)}$ in (\ref{equa:noise:2}). We can then easily obtain the optimal solution to problem (\ref{equa:noise:3}) in the following proposition.
\begin{proposition}\label{proposition:3.1}
The optimal noise jamming power to problem (\ref{equa:noise:3}) is given as
\begin{align*}
Q_1^\star = \left\{\begin{array}{ll}
\tilde Q_1, &{\rm if}~ \tilde r_{\rm M}^{(\rm II)}< \min \left(r_{\rm R}, r_{\rm B}\right)~{\rm and} ~ Q_{\rm max}\ge \tilde Q_1,\\
0,  &{\rm otherwise,}
\end{array}\right.
\end{align*}
where $\tilde Q_1$ is given in (\ref{equa:opnoise:1}). The corresponding maximum eavesdropping rate is
\begin{align*}
&R_{\rm eav}^{(\rm II)} = \nonumber\\
&\left\{\begin{array}{ll}
\min \left(r_{\rm R},r_{\rm B}\right), &{\rm{if}}\ \tilde r_{\rm M}^{(\rm II)}\geq \min \left(r_{\rm R}, r_{\rm B}\right), \\ \tilde r_{\rm M}^{(\rm II)}, &{\rm{if}}\ \tilde r_{\rm M}^{(\rm II)}< \min \left(r_{\rm R}, r_{\rm B}\right)~{\rm and}~ Q_{\rm max} \geq \tilde Q_1,
\\  0,  &\rm otherwise.
 \end{array} \right.
\end{align*}
\end{proposition}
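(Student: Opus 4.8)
The plan is to exploit the monotonicity of the three constituent rate terms and then carry out a short case analysis on whether successful decoding is possible. First I would observe that $\tilde r_{\rm R}(Q_1)$ in (\ref{equa:noise:1}) is strictly decreasing in $Q_1$, whereas $r_{\rm B}$ in (\ref{equa:jnl:4}) and $\tilde r_{\rm M}^{(\rm II)}$ in (\ref{equa:noise:2}) do not depend on $Q_1$. Hence the end-to-end suspicious rate $\min(\tilde r_{\rm R}(Q_1), r_{\rm B})$ is non-increasing in $Q_1$, and the decoding condition $\tilde r_{\rm M}^{(\rm II)} \ge \min(\tilde r_{\rm R}(Q_1), r_{\rm B})$ only becomes easier to meet as $Q_1$ grows. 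This already tells us that, within any range of $Q_1$ in which decoding succeeds, the objective equals the non-increasing quantity $\min(\tilde r_{\rm R}(Q_1), r_{\rm B})$, so the monitor should use the \emph{smallest} jamming power that still guarantees successful decoding.

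Next I would split into two cases. In the first case, $\tilde r_{\rm M}^{(\rm II)} \ge \min(r_{\rm R}, r_{\rm B}) = \min(\tilde r_{\rm R}(0), r_{\rm B})$, so decoding already succeeds at $Q_1 = 0$; since the objective is non-increasing, setting $Q_1^\star = 0$ is optimal and yields $R_{\rm eav}^{(\rm II)} = \min(r_{\rm R}, r_{\rm B})$. In the second case, $\tilde r_{\rm M}^{(\rm II)} < \min(r_{\rm R}, r_{\rm B})$, so decoding fails at $Q_1 = 0$. Because $\tilde r_{\rm M}^{(\rm II)} < r_{\rm B}$ here, the condition $\tilde r_{\rm M}^{(\rm II)} \ge \min(\tilde r_{\rm R}(Q_1), r_{\rm B})$ is equivalent to $\tilde r_{\rm R}(Q_1) \le \tilde r_{\rm M}^{(\rm II)}$. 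As $\tilde r_{\rm R}(0) = r_{\rm R} > \tilde r_{\rm M}^{(\rm II)}$ and $\tilde r_{\rm R}$ is strictly decreasing, there is a unique threshold $\tilde Q_1$ satisfying $\tilde r_{\rm R}(\tilde Q_1) = \tilde r_{\rm M}^{(\rm II)}$; solving this scalar equation gives exactly the expression in (\ref{equa:opnoise:1}), where the $\max(\cdot, 0)$ is inactive since $|h_{\rm AR}|^2 P_{\rm A} > |h_{\rm RM}|^2 P_{\rm R}$.

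I would then finish the second case using the monotonicity remark above: decoding succeeds precisely for $Q_1 \ge \tilde Q_1$, and throughout this region the objective equals $\tilde r_{\rm R}(Q_1)$, which is strictly decreasing, so the optimal feasible power is the smallest one, $Q_1^\star = \tilde Q_1$, provided $\tilde Q_1 \le Q_{\rm max}$; this gives $R_{\rm eav}^{(\rm II)} = \tilde r_{\rm R}(\tilde Q_1) = \tilde r_{\rm M}^{(\rm II)}$. If instead $\tilde Q_1 > Q_{\rm max}$, then $\tilde r_{\rm R}(Q_1) > \tilde r_{\rm M}^{(\rm II)}$ for every admissible $Q_1$, decoding never succeeds, and $R_{\rm eav}^{(\rm II)} = 0$ with $Q_1^\star = 0$ trivially optimal. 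The step I expect to require the most care is justifying that operating exactly at the success-region boundary $Q_1 = \tilde Q_1$ is optimal rather than at some interior point: this rests on the fact that the eavesdropping rate coincides with the non-increasing end-to-end rate throughout the success region, so any overshoot strictly wastes rate. I would also double-check that the feasibility test $Q_{\rm max} \ge \tilde Q_1$ together with the sign of $\tilde Q_1$ partitions the ``otherwise'' branch correctly, ensuring the three reported cases are exhaustive and mutually exclusive.
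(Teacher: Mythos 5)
Your proof is correct and follows essentially the same reasoning as the paper, which justifies Proposition~\ref{proposition:3.1} only through the informal two-case discussion immediately after it (no jamming needed when $\tilde r_{\rm M}^{(\rm II)}\geq \min(r_{\rm R},r_{\rm B})$; otherwise use the minimum power $\tilde Q_1$ if feasible, else eavesdropping fails). Your write-up simply makes that argument rigorous by adding the monotonicity of $\tilde r_{\rm R}(Q_1)$, the equivalence of the decoding condition to $\tilde r_{\rm R}(Q_1)\le \tilde r_{\rm M}^{(\rm II)}$, and the uniqueness of the threshold $\tilde Q_1$.
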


This proposition can be intuitively explained by considering two cases. First, when $\tilde r_{\rm M}^{(\rm II)}\geq \min \left(r_{\rm R}, r_{\rm B}\right)$, the monitor can successfully eavesdrop the suspicious message in the second hop even without any jamming, and thus we have $Q_1^\star = 0$ and $R_{\rm eav}^{(\rm II)}= \min \left(r_{\rm R},r_{\rm B}\right)$. Next, when $\tilde r_{\rm M}^{(\rm II)}< \min \left(r_{\rm R}, r_{\rm B}\right)$, a minimum jamming power $\tilde Q_1$ is required for successful eavesdropping. In this case, if $\tilde Q_1 \le Q_{\rm max}$, we have $Q_1^\star =\tilde Q_1$ and $R_{\rm eav}^{(\rm II)}  = \tilde r_{\rm M}^{(\rm II)}$; otherwise, we have $Q_1^\star =0$ and the eavesdropping is unsuccessful with $R_{\rm eav}^{(\rm II)}  = 0$.

\subsection{Optimal Hybrid Jamming Design for Mode (III)}
Next, we solve problem (\ref{pro:15}) to obtain $R_{\rm eav}^{(\rm III)}$ in mode (III). To facilitate the derivation, we first obtain the minimum achievable rate (\ref{equa:combined:2}) at Bob under the hybrid jamming, by jointly optimizing $\alpha$ and $Q_2$, i.e.,
\begin{align}
\hat{r}_{\rm B}^{\rm min} = \min\limits_{\alpha, Q_2\ge 0} &\hat{r}_{\rm B}(\alpha, Q_2)\label{equa:opcombined:1:min} \\
{\mathrm{s.t.}}~ &(\ref{equa:combined:4}). \nonumber
\end{align}
Then we have the following lemma.
\begin{lemma}\label{lemma:1}
The optimal solution to problem (\ref{equa:opcombined:1:min}) is given as\vspace{-1em}

\begin{small}\begin{align}
\overline{\alpha}=\min\bigg(\sqrt{\frac{Q_{\rm max}}{P_{\rm A}|h_{\rm AM}|^2+\sigma^2}},
\frac{P_{\rm R}|h_{\rm RB}|}{P_{\rm A}|h_{\rm AM}||h_{\rm MB}|}\bigg),\label{eqn:underline:alpha:hat}
\end{align}\end{small}
\begin{align}
\overline{Q}_2&=Q_{\rm max}-(P_{\rm A}|h_{\rm AM}|^2+\sigma^2)\overline{{\alpha}}^2. \label{eqn:Q_min}
\end{align}
The minimum achievable rate at Bob is $\hat{r}_{\rm B}^{\rm min} = \hat{r}_{\rm B}(\overline{\alpha}, \overline{Q}_2)$. 
\end{lemma}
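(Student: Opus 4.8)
My plan is to exploit the monotonicity of $\log_2(1+x)$ so that minimizing $\hat r_{\rm B}$ in (\ref{equa:opcombined:1:min}) is equivalent to minimizing the SINR $\hat\gamma_{\rm B}(\alpha,Q_2)$ in (\ref{equa:combined:2}). First I would fix $\alpha$ and optimize over $Q_2$. Since $Q_2$ enters $\hat\gamma_{\rm B}$ only through the denominator with the positive coefficient $|h_{\rm MB}|^2$, the SINR is strictly decreasing in $Q_2$; hence the power budget (\ref{equa:combined:4}) must be active at optimality, i.e. $Q_2=Q_{\rm max}-(P_{\rm A}|h_{\rm AM}|^2+\sigma^2)\alpha^2$. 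This is precisely the claimed $\overline Q_2$ in (\ref{eqn:Q_min}), and the requirement $Q_2\ge 0$ forces $\alpha\le \sqrt{Q_{\rm max}/(P_{\rm A}|h_{\rm AM}|^2+\sigma^2)}$, which is the first argument of the $\min$ in (\ref{eqn:underline:alpha:hat}).

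Next I would substitute this $Q_2$ back into $\hat\gamma_{\rm B}$ and observe a convenient cancellation: the forwarding noise term $\alpha^2|h_{\rm MB}|^2\sigma^2$ and the $-\alpha^2|h_{\rm MB}|^2\sigma^2$ coming from the active budget cancel, so the denominator collapses to $C-b^2\alpha^2$ with $C\triangleq|h_{\rm MB}|^2Q_{\rm max}+\sigma^2$ and $b\triangleq\sqrt{P_{\rm A}}|h_{\rm AM}||h_{\rm MB}|$. Writing $a\triangleq\sqrt{P_{\rm R}}|h_{\rm RB}|$, the phase-aligned numerator of (\ref{equa:combined:2}) is $(a-b\alpha)^2$, so the whole problem reduces to the one-dimensional minimization of $f(\alpha)=\frac{(a-b\alpha)^2}{C-b^2\alpha^2}$ over $\alpha\in[0,\alpha_{\max}]$, where $\alpha_{\max}\triangleq\sqrt{Q_{\rm max}/(P_{\rm A}|h_{\rm AM}|^2+\sigma^2)}$. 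A short bound shows $C-b^2\alpha^2>\sigma^2>0$ on this interval, so $f$ is well defined and nonnegative.

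I would then analyze $f$. It vanishes exactly at $\alpha_0=a/b=\sqrt{P_{\rm R}}|h_{\rm RB}|/(\sqrt{P_{\rm A}}|h_{\rm AM}||h_{\rm MB}|)$, the coefficient for which the monitor's forwarded copy perfectly cancels the relay's signal at Bob; since $f\ge 0$, this $\alpha_0$ is the global unconstrained minimizer and corresponds to the second argument of the $\min$ in (\ref{eqn:underline:alpha:hat}). If $\alpha_0\le\alpha_{\max}$, then $\overline\alpha=\alpha_0$ and $\hat r_{\rm B}^{\rm min}=0$. Otherwise perfect cancellation is infeasible and I must minimize $f$ on $[0,\alpha_{\max}]$ with $\alpha_{\max}<\alpha_0$. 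Differentiating gives $f'(\alpha)\propto(a-b\alpha)(ab\alpha-C)$; on $[0,\alpha_{\max}]$ the factor $a-b\alpha$ is positive, so $f$ is decreasing wherever $ab\alpha<C$, and the minimizer is the boundary point $\overline\alpha=\alpha_{\max}$. Combining the two cases yields $\overline\alpha=\min(\alpha_{\max},\alpha_0)$ as in (\ref{eqn:underline:alpha:hat}), with $\overline Q_2$ from (\ref{eqn:Q_min}).

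The main obstacle I anticipate is the monotonicity claim in the power-limited case: because both the numerator and the denominator of $f$ shrink as $\alpha$ grows, it is not immediate that driving $\alpha$ to its largest feasible value beats stopping at the interior stationary point $\alpha=C/(ab)$. The decisive step is therefore to confirm that $C/(ab)$ lies outside $[0,\alpha_{\max}]$ whenever $\alpha_0>\alpha_{\max}$; it suffices that $C\ge a^2$, i.e. $|h_{\rm MB}|^2Q_{\rm max}+\sigma^2\ge P_{\rm R}|h_{\rm RB}|^2$, for then $C/(ab)\ge\alpha_0>\alpha_{\max}$, so that the factor $ab\alpha-C$ stays negative throughout and the decreasing branch of $f$ extends across the entire feasible interval.
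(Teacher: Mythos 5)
Your route is the same one the paper's appendix takes: use monotonicity of $\log_2(1+x)$ to replace problem (\ref{equa:opcombined:1:min}) by SINR minimization (\ref{problem:gamma:min}), argue the budget (\ref{equa:combined:4}) is tight at optimality, substitute $Q_2$ out, and minimize a single-variable function. Your $f(\alpha)=(a-b\alpha)^2/(C-b^2\alpha^2)$ is exactly the objective of the paper's reduced problem (\ref{equa:prof:1}), your factorization $f'(\alpha)\propto(a-b\alpha)(ab\alpha-C)$ is correct, the bound $C-b^2\alpha^2>\sigma^2$ on the feasible interval is correct, and the case $\alpha_0\le\alpha_{\max}$ is handled correctly. (A side remark: your $\alpha_0$ carries square roots on $P_{\rm A},P_{\rm R}$, whereas the printed (\ref{eqn:underline:alpha:hat}) does not; you silently corrected what is presumably a typo in the statement.)

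The gap is the one you flag yourself in your final paragraph, and it is fatal: the condition $C\ge a^2$, i.e.\ $|h_{\rm MB}|^2Q_{\rm max}+\sigma^2\ge P_{\rm R}|h_{\rm RB}|^2$, is an extra assumption on the channels and powers, not a consequence of the setup, and nothing in the problem rules out its failure. When it fails, the stationary point $C/(ab)$ can lie strictly inside $[0,\alpha_{\max}]$, and by your own sign analysis $f$ decreases before it and increases after it, so the minimizer is $C/(ab)$ and not the boundary point $\alpha_{\max}$. Concretely, take $P_{\rm A}=P_{\rm R}=1$, $|h_{\rm AM}|^2=|h_{\rm MB}|^2=1$, $|h_{\rm RB}|^2=4$, $Q_{\rm max}=1$, $\sigma^2=0.01$: then $a=2$, $b=1$, $C=1.01$, $\alpha_{\max}=\sqrt{1/1.01}\approx 0.995<\alpha_0=2$, yet $C/(ab)=0.505$ is feasible with $f(0.505)\approx 2.96$, while $f(\alpha_{\max})\approx 50.8$. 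Intuitively, spending the whole budget on forwarding leaves essentially no AN, so the denominator collapses to roughly $\sigma^2$ and the SINR shoots back up; the optimum must keep a large AN component ($Q_2\approx 0.74$ here). So the step ``the minimizer is the boundary point $\alpha_{\max}$'' is false in general, and your proof cannot be completed as written. To be fair, this is not only your problem: the paper's own proof disposes of (\ref{equa:prof:1}) with the phrase ``by examining the first-order derivative'' and overlooks exactly this stationary point, so Lemma \ref{lemma:1} itself is incorrect in this parameter regime. Carried to its honest conclusion, your analysis yields the corrected solution $\overline\alpha=\min\left(\alpha_{\max},\ \alpha_0,\ \frac{|h_{\rm MB}|^2Q_{\rm max}+\sigma^2}{\sqrt{P_{\rm A}P_{\rm R}}\,|h_{\rm AM}||h_{\rm MB}||h_{\rm RB}|}\right)$, with $\overline Q_2$ still given by the tight budget (\ref{eqn:Q_min}).
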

\begin{proof}
See the Appendix.
\end{proof}

Based on Lemma \ref{lemma:1}, it follows that if $\hat{r}_{\rm B}^{\rm min} \le \hat r_{\rm M}^{(\rm III)}$, then the monitor is able to jam the Bob receiver for successful eavesdropping. In particular, let $\underline{\alpha}$ and $\underline{Q}_2$ denote the amplifying coefficient and the AN power such that the achievable rate $\hat{r}_{\rm B}(\underline{\alpha}, \underline{ Q}_2)$ at Bob is reduced to be equal to $\hat r_{\rm M}^{(\rm III)}$, i.e., $\hat{r}_{\rm B}(\underline{\alpha}, \underline{ Q}_2) = \hat r_{\rm M}^{(\rm III)}$. Here, $\underline{\alpha}$ and $\underline{ Q}_2$ are generally non-unique, and can be obtained numerically. We then have the following proposition.
\begin{proposition}\label{proposition:3.2}
The optimal solution $\alpha^\star$ and $Q_2^\star$ to problem (\ref{pro:15}) and the achieved maximum eavesdropping rate $R^{(\rm III)}_{\rm eav}$ are given as follows:
\begin{itemize}
\item[1)] When $\hat r_{\rm M}^{(\rm III)}\geq \min \left(r_{\rm R}, r_{\rm B}\right)$, the monitor can eavesdrop the suspicious message in the first hop without jamming; in this case, the monitor should eavesdrop passively with $\alpha^\star = 0$, $Q_2^\star = 0$, and $R^{(\rm III)}_{\rm eav} = \min \left(r_{\rm R}, r_{\rm B}\right)$.
\item[2)] When $\hat r_{\rm M}^{(\rm III)} < \min \left(r_{\rm R}, r_{\rm B}\right)$ and $\hat{r}_{\rm B}^{\rm min} \le \hat r_{\rm M}^{(\rm III)}$, the monitor can eavesdrop successfully with hybrid jamming. In this case, the monitor should choose $\alpha^\star = \underline{\alpha}$ and $Q_2^\star = \underline{ Q}_2$, and we have $R^{(\rm III)}_{\rm eav} = \hat r_{\rm M}^{(\rm III)}$;
\item[3)] Otherwise, the monitor cannot eavesdrop even with jamming at the maximum power; in this case, we have $\alpha^\star = 0$, $Q_2^\star = 0$, and $R^{(\rm III)}_{\rm eav} = 0$.
\end{itemize}
\end{proposition}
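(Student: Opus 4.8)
The plan is to exploit the fact that in mode~(III) the only jamming-dependent quantity is Bob's rate $\hat r_{\rm B}(\alpha, Q_2)$, whereas $r_{\rm R}$ in (\ref{equa:jnl:3}) and the monitor's own rate $\hat r_{\rm M}^{(\rm III)}$ are constants. Writing the end-to-end rate as $\rho(\alpha, Q_2) \triangleq \min\left(r_{\rm R}, \hat r_{\rm B}(\alpha, Q_2)\right)$, the eavesdropping rate equals $\rho$ when $\rho \le \hat r_{\rm M}^{(\rm III)}$ and equals $0$ otherwise, so problem (\ref{pro:15}) reduces to finding the largest achievable $\rho$ that does not exceed $\hat r_{\rm M}^{(\rm III)}$.

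First I would establish the achievable range of $\hat r_{\rm B}(\alpha, Q_2)$ over the feasible set defined by (\ref{equa:combined:4}). The no-jamming point $(\alpha, Q_2) = (0,0)$ is feasible and gives $\hat r_{\rm B}(0,0) = r_{\rm B}$, which is the largest attainable value since the anti-phase forwarding and the AN only lower Bob's SINR below its interference-free level; Lemma~\ref{lemma:1} supplies the smallest value $\hat{r}_{\rm B}^{\rm min}$. Because $\hat r_{\rm B}$ in (\ref{equa:combined:2}) is continuous and the feasible set of (\ref{equa:combined:4}) is connected, the attainable values of $\hat r_{\rm B}$ fill the whole interval $[\hat{r}_{\rm B}^{\rm min}, r_{\rm B}]$, and hence $\rho$ ranges continuously over $\left[\min(r_{\rm R}, \hat{r}_{\rm B}^{\rm min}),\, \min(r_{\rm R}, r_{\rm B})\right]$. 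This yields two universal bounds: the eavesdropping rate never exceeds the end-to-end rate and hence is at most $\min(r_{\rm R}, r_{\rm B})$; and whenever it is positive the success test $\hat r_{\rm M}^{(\rm III)} \ge \rho$ forces it to be at most $\hat r_{\rm M}^{(\rm III)}$.

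With these bounds the three cases follow by matching achievability to the bounds. In case~1), $\hat r_{\rm M}^{(\rm III)} \ge \min(r_{\rm R}, r_{\rm B})$, so the no-jamming point passes the success test and attains $\rho = \min(r_{\rm R}, r_{\rm B})$, meeting the first bound; thus $\alpha^\star = Q_2^\star = 0$ is optimal. In case~2), $\hat{r}_{\rm B}^{\rm min} \le \hat r_{\rm M}^{(\rm III)} < \min(r_{\rm R}, r_{\rm B}) \le r_{\rm R}$, so $\hat r_{\rm M}^{(\rm III)}$ lies in $[\hat{r}_{\rm B}^{\rm min}, r_{\rm B})$; by the intermediate value property there exist $(\underline{\alpha}, \underline{Q}_2)$ with $\hat r_{\rm B}(\underline{\alpha}, \underline{Q}_2) = \hat r_{\rm M}^{(\rm III)}$, and since this value is below $r_{\rm R}$ we get $\rho = \hat r_{\rm M}^{(\rm III)}$ with the success test holding at equality, meeting the second bound. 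In case~3), $\hat r_{\rm M}^{(\rm III)} < \hat{r}_{\rm B}^{\rm min}$ (and $< r_{\rm R}$), so for every feasible $(\alpha, Q_2)$ we have $\rho \ge \min(r_{\rm R}, \hat{r}_{\rm B}^{\rm min}) > \hat r_{\rm M}^{(\rm III)}$, the success test fails everywhere, and the eavesdropping rate is identically zero, so any choice, in particular $(0,0)$, is optimal.

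The main obstacle is the range/continuity step rather than the case bookkeeping: one must argue carefully that $r_{\rm B}$ is indeed the maximal attainable Bob rate and that every intermediate value down to $\hat{r}_{\rm B}^{\rm min}$ is realized on the feasible set, since this is precisely what supplies the upper bound in case~1) and guarantees the existence of the (non-unique) equalizing pair $(\underline{\alpha}, \underline{Q}_2)$ in case~2). Given Lemma~\ref{lemma:1} and the continuity of $\hat r_{\rm B}$ on the connected feasible region of (\ref{equa:combined:4}), this reduces to a one-line intermediate-value argument.
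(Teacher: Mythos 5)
Your reduction of problem (\ref{pro:15}) to the scalar question ``find the largest attainable value of $\rho=\min\left(r_{\rm R},\hat r_{\rm B}(\alpha,Q_2)\right)$ that does not exceed $\hat r_{\rm M}^{(\rm III)}$'' is the right one, and your handling of cases 2) and 3) is correct and complete given Lemma \ref{lemma:1}: in case 2) the success condition itself caps every positive eavesdropping rate at $\hat r_{\rm M}^{(\rm III)}$, and continuity of $\hat r_{\rm B}$ on the connected feasible set of (\ref{equa:combined:4}) plus the intermediate value theorem yields the equalizing pair $(\underline{\alpha},\underline{Q}_2)$; in case 3) every feasible point satisfies $\min\left(r_{\rm R},\hat r_{\rm B}\right)\geq\min\left(r_{\rm R},\hat r_{\rm B}^{\rm min}\right)>\hat r_{\rm M}^{(\rm III)}$, so the rate is identically zero. (For reference, the paper states this proposition without a formal proof, relying on Lemma \ref{lemma:1} and the surrounding intuition, so here you are filling a gap the paper leaves open.)

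The genuine gap is exactly the step you flag as the crux and then dispose of in one line: the claim that $\hat r_{\rm B}(0,0)=r_{\rm B}$ is the \emph{largest} value of $\hat r_{\rm B}$ over the feasible set, ``since the anti-phase forwarding and the AN only lower Bob's SINR.'' This monotonicity holds only for $\alpha$ below the cancellation point $\alpha_0=\sqrt{P_{\rm R}}|h_{\rm RB}|\big/\left(\sqrt{P_{\rm A}}|h_{\rm AM}||h_{\rm MB}|\right)$ (the role played by the second term in the $\min$ defining $\overline{\alpha}$ in (\ref{eqn:underline:alpha:hat})). Beyond $\alpha_0$, the numerator of $\hat\gamma_{\rm B}$ in (\ref{equa:combined:2}) grows like $\alpha^2P_{\rm A}|h_{\rm AM}|^2|h_{\rm MB}|^2$, and with $Q_2=0$ the SINR tends to $P_{\rm A}|h_{\rm AM}|^2/\sigma^2$ as $\alpha$ increases: the over-amplified ``anti-phase'' copy dominates the relay's signal, so the monitor effectively acts as an amplify-and-forward relay \emph{helping} Bob. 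A short computation shows $\hat\gamma_{\rm B}(\alpha,0)>\gamma_{\rm B}$ whenever $\alpha|h_{\rm MB}|\left(P_{\rm A}|h_{\rm AM}|^2-P_{\rm R}|h_{\rm RB}|^2\right)>2\sqrt{P_{\rm A}P_{\rm R}}|h_{\rm RB}||h_{\rm AM}|$, which is feasible under (\ref{equa:combined:4}) once $P_{\rm A}|h_{\rm AM}|^2>P_{\rm R}|h_{\rm RB}|^2$ and $Q_{\rm max}$ is large enough. Worse, this regime overlaps precisely with case 1): if $r_{\rm B}<r_{\rm R}$, the hypothesis $\hat r_{\rm M}^{(\rm III)}\geq\min\left(r_{\rm R},r_{\rm B}\right)$ already forces $P_{\rm A}|h_{\rm AM}|^2\geq P_{\rm R}|h_{\rm RB}|^2$. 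In that regime your case-1) upper bound collapses: a feasible over-amplified $(\alpha,0)$ gives $\min\left(r_{\rm R},\hat r_{\rm B}(\alpha,0)\right)>\min\left(r_{\rm R},r_{\rm B}\right)$ while the monitor still decodes, so under the literal formulation of (\ref{pro:15}) the value $\min\left(r_{\rm R},r_{\rm B}\right)$ is not even an upper bound. To close this, you must make explicit the restriction the paper uses implicitly, namely that the hybrid-jamming coefficient stays in the destructive regime $\alpha\leq\alpha_0$ (where your claim is immediate: the numerator of $\hat\gamma_{\rm B}$ is decreasing and the denominator increasing in $\alpha$ and $Q_2$); without some such restriction, your intermediate-value argument only establishes that the attainable values of $\hat r_{\rm B}$ \emph{contain} $\left[\hat r_{\rm B}^{\rm min},r_{\rm B}\right]$, not that $r_{\rm B}$ is maximal, and case 1) is left unproved.
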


\subsection{Eavesdropping Mode Selection}\label{sec:III:C}

After deriving the achievable eavesdropping rates $R_{\rm eav}^{(i)}$'s for the three modes, we next employ the eavesdropping mode selection to choose the best mode with the highest eavesdropping rate, i.e., the selected mode is
\begin{align}
i^\star = \arg \max_{i\in\{{\rm I},{\rm II},{\rm III}\}} R_{\rm eav}^{(i)}.
\end{align}

To provide more engineering insights, we provide intuitive discussions on the selected mode by considering two general cases under fixed channels with path-loss considered only. First, consider that the monitor is near Alice or the relay, such that the achievable rate $r_{\rm M}^{(\rm I)}$ with the MRC at the monitor is no smaller than the end-to-end suspicious communication rate $\min(r_{\rm R},r_{\rm B})$, i.e., $r_{\rm M}^{(\rm I)} \ge \min(r_{\rm R},r_{\rm B})$. In this case, passive eavesdropping is able to eavesdrop successfully, and thus is preferred.

Next, when the monitor is far away from both Alice and the relay such that $r_{\rm M}^{(\rm I)}< \min \left(r_{\rm R},r_{\rm B}\right)$, passive eavesdropping is infeasible, and proactive eavesdropping is necessary. In this case, if the monitor is closer to Alice than the relay, mode (III) performs better than mode (II) by overhearing from the nearer node Alice more clearly; and vice versa. When the monitor is too far away from both Alice and the relay but relatively close to Bob, mode (III) is the only feasible eavesdropping mode by jamming Bob effectively (see Fig. 2 in Section IV).

%
%

\section{Numerical Results}

In this section, we provide numerical results to validate the performance of our proposed design. In the simulation, suppose that Alice, the relay, and Bob are located in a line with the x-y coordinates being  (0,~0), (500~meters,~0), (1000~meters,~0), respectively. We set the transmit powers at Alice and the relay as $P_{\rm A}=P_{\rm R}=40$~dBm, the jamming power at the monitor as $Q_{\rm max} = 50$~dBm, and the noise power as ${\sigma}^{2}=80$~dBm, respectively.

Fig. \ref{fig:2} shows the selection region for different eavesdropping modes under AWGN channels. Here, we set the channel power gains based on the path-loss model $\kappa\left(\frac{d}{d_0}\right)^{-\zeta}$, where $\kappa=-60$~dB corresponds to the path-loss at the reference distance $d_0=10$~meters, and $\zeta=3$. It is observed that the selected eavesdropping modes under different scenarios are consistent with our discussion in Section~\ref{sec:III:C}.


\begin{figure}
\centering
 \epsfxsize=1\linewidth
    \includegraphics[width=6.8cm]{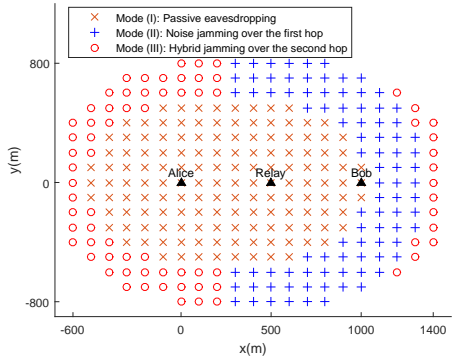}
\caption{Optimal eavesdropping modes of the monitor at different locations in AWGN channel, where only the locations with positive eavesdropping rates are shown.} \label{fig:2}
\vspace{-1em}
\end{figure}

\begin{figure}
\centering
 \epsfxsize=1\linewidth
    \includegraphics[width=6.8cm]{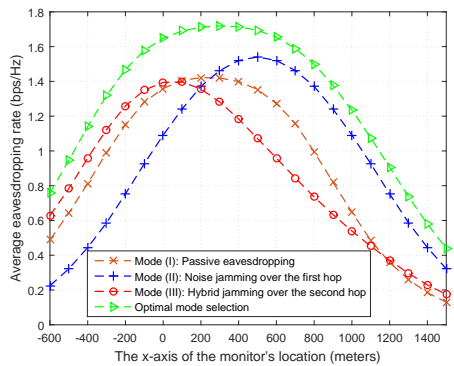}
\caption{The average eavesdropping rate versus the monitor's horizontal location in fading channels.} \label{fig:3}
\vspace{-2em}
\end{figure}

Fig.~\ref{fig:3} shows the average eavesdropping rate versus the x-axis of the monitor's location in the Rayleigh fading channel case, where the results are averaged over $10^4$ random realizations and the monitor's y-axis location is fixed as 500 meters. It is observed that due to the averaging over various random channel realizations, our proposed design with optimal eavesdropping mode selection is observed to achieve significantly improved average eavesdropping rate as compared to each individual eavesdropping mode.\vspace{-0.5em}


\section{Conclusion}
This paper studied the wireless surveillance of a two-hop suspicious communication link via a half-duplex legitimate monitor. By exploring the suspicious link's two-hop nature, the monitor can either combine two copies of the suspicious message in both hops to improve the passive eavesdropping performance, or implement noise jamming or hybrid jamming for efficient proactive eavesdropping. We proposed joint eavesdropping mode selection and jamming power allocation to maximize the eavesdropping rate at the monitor. We hope that this new design can provide insights on the wireless surveillance design by taking advantage of multi-hop suspicious communication systems.\vspace{-0.5em}

\appendix[Proof of Lemma \ref{lemma:1}]


Solving problem (\ref{equa:opcombined:1:min}) is equivalent to finding the minimum SINR at Bob, i.e.,
\begin{align}
\hat \gamma_{\rm B}^{\rm min} = \min\limits_{\alpha\ge 0, Q_2\ge 0} &\hat{\gamma}_{\rm B}(\alpha, Q_2) \label{problem:gamma:min}\\
{\mathrm{s.t.}}~ &(\ref{equa:combined:4}). \nonumber
\end{align}
%

It is evident that for problem (\ref{problem:gamma:min}), the optimality is attained when the jamming power at the monitor is used up, i.e., the power constraint in (\ref{equa:combined:4}) is tight. As a result, we have
\begin{align}
Q_2=Q_{\rm max}-(P_{\rm A}|h_{\rm AM}|^2+\sigma^2){\alpha}^2. \label{eqn:Q_min:temp}
\end{align}

By substituting (\ref{eqn:Q_min:temp}), problem (\ref{problem:gamma:min}) can be reformulated as
\begin{align}
\hat \gamma_{\rm B}^{\rm min} =
\mathop{\min}\limits_{\alpha}~& \frac{(\sqrt{ P_{\rm R}}|h_{\rm RB}|-\alpha \sqrt{ P_{\rm A}}|h_{\rm AM}h_{\rm MB}|)^2}{\sigma^2- \alpha^2 P_{\rm A}|h_{\rm AM}|^2|h_{\rm MB}|^2+ Q_{\rm max}|h_{\rm MB}|^2} \nonumber\\
{\mathrm{s.t.}}~&   0 \leq { \alpha} \leq \sqrt{\frac{ Q_{\rm max}}{ P_{\rm A}|h_{\rm AM}|^2+\sigma^2}}.
\label{equa:prof:1}
\end{align}
By examining the first-order derivative of the objective function in problem (\ref{equa:prof:1}), its optimal solution can be obtained as $\overline{{\alpha}}$ in (\ref{eqn:underline:alpha:hat}). By substituting this into (\ref{eqn:Q_min:temp}), we have $\overline{Q}_2$ in (\ref{eqn:Q_min}). Therefore, this lemma is proved.\vspace{-0.5em}

\end{document}